\theoremstyle{plain}
\newlist{properties}{enumerate}{10}
\setlist[properties]{label*=\roman*)}
\crefname{propertiesi}{\text{property}}{\text{properties}}
\Crefname{propertiesi}{\text{Property}}{\text{Properties}}
\newlist{steps}{enumerate}{10}
\setlist[steps]{label*=\arabic*)}
\crefname{stepsi}{\text{step}}{\text{steps}}
\Crefname{stepsi}{\text{Step}}{\text{Steps}}
\newlist{points}{enumerate}{10}
\setlist[points]{label*=\arabic*)}
\crefname{pointsi}{\text{point}}{\text{points}}
\Crefname{pointsi}{\text{Point}}{\text{Points}}
\newtheorem{thm}{Theorem}[]
\theoremstyle{definition}
\newtheorem*{crit*}{Criterion}
\newtheorem{lemma}{Lemma}[section]
\newtheorem{remark}{Remark}[section]
\newcommand{\defeq}{\vcentcolon=}
\newcommand{\defeqs}{\stackrel{\scri}{\vcentcolon=}}
\newcommand{\scri}{\mathscr{J}}
\newcommand{\pt}[2]{\tensor{\hat{#1}}{#2}}
\newcommand{\ctru}[3]{\tensor[^{{\,}^{\scriptscriptstyle\!#1\!}\!}]{#2}{#3}}
\newcommand{\ctrdu}[4]{\prescript{{\,}^{\scriptscriptstyle\!#2\!}\!}{{\,}{\scriptscriptstyle\!#1\!}}{\tensor{#3}{#4}}}
\newcommand{\csru}[2]{\tensor[^{{\,}^{\scriptscriptstyle\!#1\!}\!}]{#2}{}}
\newcommand{\ct}[2]{\tensor{{#1}}{#2}}
\newcommand{\pct}[2]{\tensor{\underaccent{*}{#1}}{#2}}
\newcommand{\pctrd}[3]{\tensor[_{{\ }_{\scriptscriptstyle#1}\!}]{\underaccent{*}{#2}}{#3}}
\newcommand{\lct}[2]{\tensor{\accentset{\leftarrow}{#1}}{#2}}
\newcommand{\ctc}[2]{\tensor{\mathring{#1}}{#2}}
\newcommand{\cts}[2]{\tensor{\overline{#1}}{#2}}
\newcommand{\cs}[1]{#1}
\newcommand{\ps}[1]{\hat{#1}}
\newcommand{\df}[1]{\text{d}#1}
\newcommand{\td}[2]{\frac{\df#1}{\df#2}}
\newcommand{\prn}[1]{\left(#1\right)}
\newcommand{\brkt}[1]{\left[#1\right]}
\newcommand{\evalat}[1]{\Big|_{#1}}
\newcommand{\eqs}{\stackrel{\scri}{=}}
\newcommand{\cd}[1]{\tensor{\nabla}{#1}}
\newcommand{\spacef}{\ }
\renewcommand{\P}{\mathcal{P}}
\newcolumntype{M}[1]{>{\centering\arraybackslash}m{#1}}
\newcolumntype{N}{@{}m{0pt}@{}}
\def\be{\begin{equation}}
\def\ee{\end{equation}}
\def\bea{\begin{eqnarray}}
\def\eea{\end{eqnarray}}
\def\bean{\begin{eqnarray*}}
	\def\eean{\end{eqnarray*}}
\newcounter{marginnotecount}[section]
	\title{\Large \textbf{The peeling theorem}\\ \textbf{with arbitrary cosmological constant}}
	\author[]{Francisco Fernández-Álvarez\thanks{francisco.fernandez@ehu.eus}\ }
	\author[]{\ José M. M. Senovilla\thanks{josemm.senovilla@ehu.eus}} 
	\affil[]{Departamento de Física\\ Universidad del País Vasco UPV/EHU\\ Apartado 644, 48080 Bilbao, Spain }
	\date{\today{}}
\begin{document}
	
	\maketitle

	\begin{abstract}
		A method for deriving the asymptotic behaviour of any physical field is presented. This leads to a geometrically meaningful derivation of the peeling properties for arbitrary values of the cosmological constant. Application to the outstanding case of the physical Weyl tensor provides the explicit form of all terms that determine its asymptotic behaviour along arbitrary lightlike geodesics. The results follow under the assumption of a conformal completion {\itshape à la Penrose}. The only freedom available is the choice of a null vector at the conformal boundary of the space-time (which determines the lightlike geodesic arriving there).
	\end{abstract}

%%%%%%%%%%%%%%%%%%%%%%%%%%%%%%%%%%%%%%%%%%%%%%%%%%%%%%%%%%%%%%%%%%%%%%%%	
%%%%%%%%%%%%%%%%%%%%%%%%%%%%%%%%%%%%%%%%%%%%%%%%%%%%%%%%%%%%%%%%%%%%%%%%
	\section{Introduction}\label{sec:introduction}
		%	\subfile{./sections/section-1/section-1.tex}
	
		The asymptotic expansion of the Weyl tensor in powers of an affine parameter along a null geodesic features a very particular algebraic structure when the cosmological constant $ \Lambda $ vanishes. This kind of behaviour is known as the \emph{peeling property} of the Weyl tensor --see, e.g., \cite{Kroon}. It first appeared in the work of Sachs \cite{Sachs61}, a study based on the Petrov classification of the curvature tensor under a certain assumption concerning the `outgoing radiation condition' for vacuum metrics. A subsequent analysis by Sachs \cite{Sachs1962} used a boundary condition à la Sommerfeld in a (Bondi-Sachs \cite{Bondi1962}) metric-based approach. Newman and Penrose also derived the peeling behaviour by the spin-coefficient method in vacuum \cite{Penrose62}, imposing a particular decay condition on the Weyl scalar $ \Psi_{0} $. Alternative logarithmic expansions have been also studied \cite{Winicour85}.\\
		
		The result was soon generalised to arbitrary values of the cosmological constant $ \Lambda $ by Penrose \cite{Penrose65}, in this case by making use of the conformal completion and introducing an auxiliary null vector field parallel-transported along the null geodesic under consideration.  There are much more recent works that aim at a generalisation of the peeling property in the presence of a non-vanishing cosmological constant without using explicitly a conformal completion of the space-time. The results presented in \cite{Saw2017,Saw2020} rely on assumptions that otherwise can be deduced if a conformal completion is used, while the work in \cite{Xie2020} focuses on a particular kind of metrics of the Bondi-Sachs kind and imposes boundary conditions at infinity. \\
		
		The use of null tetrads may shade the interpretation of the peeling property ---since different bases provide with different asymptotic behaviours. In this sense, the relation between the use of different null tetrads in connection with the conformal boundary $ \scri $ of the conformal completion was clearly presented in \cite{Krtous2004}.  \\
		
		In this letter, a derivation of a peeling theorem based on Geroch's ideas \cite{Geroch1977} is given for arbitrary $\Lambda$. Our only assumption is the existence of a conformal completion à la Penrose. We also assume that the conformal factor $\Omega$ can be expanded in powers of an affine parameter $  \lambda $ nearby $\scri$, but this could be appropriately relaxed, if needed, leading to different peeling behaviours. Our derivation of the peeling property is not only valid for \emph{any} value of $ \Lambda $, but also independent of the choice of bases and, furthermore, it provides a general method to determine the asymptotic behaviour of \emph{any} physical field --not only the Weyl tensor. Remarkably, and importantly, it has a  clear geometric meaning.\\
		
		Throughout this work the conformal completion \emph{à la Penrose} $ \prn{M,\ct{g}{_{\alpha\beta}}} $ of a physical space-time $ \prn{\hat{M},\pt{g}{_{\alpha\beta}}} $ satisfying the Einstein field equations
		$$
		\pt{R}{_{\mu\nu}} -\frac{1}{2} \pt{g}{_{\mu\nu}} \hat R +\Lambda \pt{g}{_{\mu\nu}} = \frac{8\pi G}{c^4} \pt{T}{_{\mu\nu}}
		$$
with arbitrary cosmological constant $\Lambda$ is considered, where $\pt{T}{_{\mu\nu}}$ is the energy-momentum tensor. The conformal factor $ \Omega $ relates the two metrics via 
$$
\ct{g}{_{\mu\nu}} \stackrel{\hat M}{=}\Omega^2 \pt{g}{_{\mu\nu}},
$$
see, e.g. \cite{Kroon}, for details. The following conventions are used:  $ (-,+,+,+) $ for the metric signature and $ (\cd{_\alpha}\cd{_\beta}-\cd{_\beta}\cd{_\alpha})\ct{v}{_\gamma} = \ct{R}{_{\alpha\beta\gamma}^\mu} \ct{v}{_\mu}$ for the curvature tensor, where $ \cd{_\alpha} $ is the covariant derivative on $ \prn{\cs{M},\ct{g}{_{\alpha\beta}}}  $. Objects belonging to $ \prn{\hat{M},\pt{g}{_{\alpha\beta}}} $ carry a hat. \\

For convenience, we fix partially the conformal gauge, such that at the conformal boundary $ \scri $ one has $ \cd{_{\alpha}}\ct{N}{_{\beta}}\eqs 0 $, where $ \ct{N}{_{\alpha}}\defeq\cd{_{\alpha}}\Omega $ is a one-form orthogonal to $\{\scri: \Omega=0\}$ and such that $\ct{N}{_{\alpha}}\ct{N}{^{\alpha}}\eqs-\Lambda/3$. Observe that $\ct{N}{^{\alpha}}$ is timelike, null or spacelike for $\Lambda$ positive, zero or negative, respectively. And that $\ct{N}{^{\alpha}}$ points outwards from $\hat M$, along $\scri$, and towards $\hat M$ for $\Lambda$ positive, zero or negative, respectively.  %%%%%%%%%%%%%%%%%%%%%%%%%%%%%%%%%%%%%%%%%%%%%%%%%%%%%%%%%%%%%%%%%%%%%%%%
	\section{The peeling theorem for arbitrary cosmological constant}\label{sec:theorem}
	In \cite{Fernandez-Alvarez_Senovilla-afs} a derivation of the peeling theorem for $ \Lambda=0 $ was presented based on a robust geometrical construction that indeed can be straightforwardly generalised to space-times with  $ \Lambda\neq 0$.\\
	 
	 Consider a future-directed curve $ \gamma\prn{\lambda} $ 
	 		\begin{center}
			\begin{tabular}{  M{1cm} M{1cm}  M{1cm}  M{1cm}  N }
				$ \gamma\prn{\lambda}: $& $ \brkt{-1,0} $&$  \longrightarrow  $ &  $\cs{M} $  & \\ 
				\spacef & $ \lambda $&$ \longrightarrow  $ &  $ p $\spacef.  &\\
			\end{tabular}
		\end{center}
		Here $ \lambda\in[-1,0] $ is the parameter of the curve, and we denote by $ p_{0}\in\scri^+ $ its future endpoint corresponding to $ \lambda_{0}\defeq\lambda\evalat{p_{0}}=0 $ and $ p_{1}\in\ps{M} $ its past endpoint corresponding to $ \lambda_{1}\defeq\lambda\evalat{p_{1}}=-1 $. Choose the parametrisation such that
		\begin{equation}\label{eq:null-geodesic-normalisation}
			\ct{\ell}{^\mu}\ct{N}{_{\mu}}= \frac{\df\Omega}{\df\lambda}
			\eqs -1 \spacef,
		\end{equation}
	where $ \ct{\ell}{^\alpha} $ is the vector field tangent to $ \gamma(\lambda) $. Following the same scheme as in \cite{Fernandez-Alvarez_Senovilla-afs}, which was inspired by \cite{Geroch1977}, let
		\begin{align*}
			\ct{t}{_{\alpha}^\beta}\prn{\lambda_{i},\lambda_{j}}&\spacef \text{ be the parallel propagator w.r.t. }\ct{\Gamma}{^a_{bc}}\spacef \text{ and}\\
			\pt{t}{_{\alpha}^\beta}\prn{\lambda_{i},\lambda_{j}}&\spacef,\text{ the parallel propagator w.r.t. }\pt{\Gamma}{^a_{bc}}\spacef,
		\end{align*}
		where $\ct{\Gamma}{^a_{bc}}$ and $\pt{\Gamma}{^a_{bc}}$ represent the Levi-Civita connection symbols of $\ct{g}{_{\mu\nu}}$ and $\pt{g}{_{\mu\nu}}$, respectively.
	In this construction, three different parallel transports of any covariant tensor field are considered:
		 	\begin{enumerate}
		 	\item From $ p_{0} $ to $ p\prn{\lambda} $ with $ \ct{t}{_{\alpha}^\beta}\prn{\lambda,\lambda_{0}} $,
		 	\item From $ p(\lambda) $ to $ p_{1} $ with $ \pt{t}{_{\alpha}^\beta}\prn{\lambda_{1},\lambda} $,
		 	\item From $ p_{1} $ to $ p_{0} $ with $ \ct{t}{_{\alpha}^\beta}\prn{\lambda_{0},\lambda_{1}} $.
		 	\end{enumerate}
	\begin{figure}[h!]
		 \centering
			\begin{tikzpicture}
				\matrix (m) [matrix of math nodes,row sep=7em,column sep=7em,minimum width=2em]
				{ \Lambda_{p_{0}}M & \Lambda_{p(\lambda)} M\\
				  {} & \Lambda_{p_{1}}M\\};
				\path[-stealth]
				 (m-1-1) edge [left] node [above] {$ \ct{t}{_{\alpha}^\beta}\prn{\lambda,\lambda_{0}} $} (m-1-2)
				% (m-2-1.east|-m-2-2) edge node [below] {$\mathcal{B}_T$}
				     %    node [above] {$\exists$} (m-2-2)
				 (m-1-2) edge node [right] {$\pt{t}{_{\alpha}^\beta}\prn{\lambda_{1},\lambda}$} (m-2-2)
				   (m-2-2) edge node [left] {$ \ct{t}{_{\alpha}^\beta}\prn{\lambda_{0},\lambda_{1}} \spacef$} (m-1-1);
			\end{tikzpicture}	
	\caption{The parallel propagator $ \ct{t}{_{\alpha}^\beta} $ with respect to the conformal connection constitutes an isomorphism between co-tangent spaces $\Lambda_{p}M  $ at points on $ M $ that belong to $ \gamma $, whereas the parallel propagator $ \pt{t}{_{\alpha}^\beta}  $ with respect to the physical connection is an isomorphism between co-tangent spaces at points on $ \hat M \subset M $ that belong to $ \gamma $, thus \emph{excluding} the endpoint $ p_{0} $ at the boundary $ \scri $. The diagram illustrates the composition of three of these isomorphisms leading to an automorphism on the co-tangent space $ \Lambda_{p_{0}}M $ at $ p_{0}\in\scri $.}\label{fig:diagram-L}
	\end{figure}
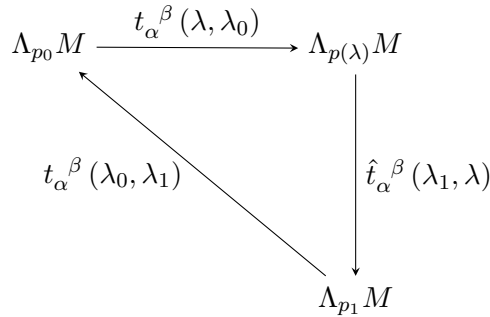
	\Cref{fig:diagram-L} is a diagram showing the sequence of one operation after another: essentially, a transport along $ \gamma $ back and forth, departing from $ \scri^+ $ and interchanging the conformal connection by the physical one for a stretch of $ \gamma $. The three transports, when combined, define an automorphism 
	\begin{center}
		\begin{tabular}{  M{1cm} M{0.8cm}  M{0.8cm}  M{0.8cm}  N }
			$ 	\ctrdu{\lambda}{\gamma}{L}{_{\alpha}^\beta}: $& $ \Lambda_{p_{0}}M $&$ \longrightarrow  $ &  $ \Lambda_{p_{0}}M $   & 
		\end{tabular}
	 	\end{center}
		on the co-tangent space $ \Lambda_{p_{0}} M$ at $ p_{0} $. This automorphism depends on the curve $ \gamma $ and on the intermediate point $ p(\lambda) $ and is explicitly given by
	 	\begin{equation}
	 		\ctrdu{\lambda}{\gamma}{L}{_{\alpha}^\beta}\defeq\ct{t}{_{\alpha}^\mu}\prn{\lambda_{0},\lambda_{1}}\pt{t}{_{\mu}^\rho}\prn{\lambda_{1},\lambda}\ct{t}{_{\rho}^\beta}\prn{\lambda,\lambda_{0}}\spacef .
	 	\end{equation} 
		
		From now on we assume that $\gamma(\lambda)$ is a null geodesic with affine parameter $\lambda$. Observe that such a $\gamma(\lambda)$ is fully determined by $p_0\in\scri$ and an initial null vector $\ct{\ell}{^\alpha}$ of $T_{p_0} M$, therefore all one has to do for the following construction is to pick up any null vector $\ct{\ell}{^\alpha}|_{p_0}$ at the chosen point of $p_0\in\scri$.
	 	Then the characterisation and properties of $ \ctrdu{\lambda}{\gamma}{L}{_{\alpha}^\beta} $ can be found in  \cite{Fernandez-Alvarez_Senovilla-afs}, where it was shown to obey the linear system of ordinary differential equations
		\begin{equation}\label{eq:operatorL-diff-eq}
			\td{	\ctrdu{\lambda}{\gamma}{L}{_{\alpha}^\beta}}{\lambda}=\frac{1}{\Omega\prn{\lambda}}	\ctrdu{\lambda}{\gamma}{L}{_{\alpha}^\mu}\ct{\Lambda}{_{\mu}^\beta}\spacef,
		\end{equation}
where 
			\begin{equation}
			\ctrdu{\lambda}{\gamma}{\Lambda}{_{\beta}^\alpha}=\td{\Omega}{\lambda}\delta^\alpha_{\beta}+\ct{\ell}{^\alpha}\pct{N}{_{\beta}}-\ct{\ell}{_{\beta}}\pct{N}{^\alpha}\spacef,
			\end{equation}
and we have introduced the notation
\begin{equation}
	 		\pct{T}{_{{\alpha_1}...{\alpha_r}}}\defeq \ct{t}{_{\alpha_1}^{\mu_{1}}}\prn{\lambda_{0},\lambda}...\ct{t}{_{\alpha_r}^{\mu_{r}}}\prn{\lambda_{0},\lambda}\ct{T}{_{{\mu_1}...{\mu_r}}}\spacef,
			\end{equation}
for the parallel transport to $p_0$ of arbitrary covariant tensors $ \ct{T}{_{{\alpha_1}...{\alpha_r}}} $ at $ p\prn{\lambda} \in M $. In particular,
	\begin{equation}\label{eq:operatorL-N-transported}
		\pctrd{\lambda}{N}{_{\beta}}\defeq\ct{t}{_{\beta}^\mu}\prn{\lambda_{0},\lambda}\ct{N}{_{\mu}}\prn{\lambda}\spacef,\quad\pctrd{\lambda_0}{N}{_{\beta}}=\ct{N}{_{\beta}}\spacef.
		\end{equation}
%		and in this case we keep the $\lambda$ subindex for later convenience.
%			\begin{equation}\label{eq:operatorL-N-transported}
%				\pct{N}{_{\beta}}=\pctrd{\lambda}{N}{_{\beta}}\defeq\ct{t}{_{\beta}^\mu}\prn{\lambda_{0},\lambda}\ct{N}{_{\mu}}\prn{\lambda}\spacef,\quad\pctrd{\lambda_0}{N}{_{\beta}}=\ct{N}{_{\beta}}\spacef,
%		\end{equation}
	From now on we will drop the $\gamma$ and $\lambda$ decoration from $\ctrdu{\lambda}{\gamma}{L}{_{\beta}^\alpha}$ except in some convenient exceptions, and we introduce yet another notation:
			\begin{equation}
	 		\lct{S}{_{{\alpha_1}...{\alpha_r}}}\defeq \ct{L}{_{\alpha_1}^{\mu_{1}}}...\ct{L}{_{\alpha_r}^{\mu_{r}}}\ct{S}{_{{\mu_1}...{\mu_r}}}
	 	\end{equation}
	for the action of $\ct{L}{_{\alpha}^{\mu}}$ on any covariant $ \ct{S}{_{{\mu_1}...{\mu_r}}} $ at $ p_{0} \in \scri $. 
	One may actually combine the two operations, giving rise to
		\begin{equation}\label{eq:combined}
			\lct{\underaccent{*}{T}}{_{{\alpha_1}...{\alpha_r}}}\defeq\ct{L}{_{\alpha_1}^{\nu_{1}}}...\ct{L}{_{\alpha_r}^{\nu_{r}}} \ct{t}{_{\nu_1}^{\mu_{1}}}\prn{\lambda_{0},\lambda}...\ct{t}{_{\nu_r}^{\mu_{r}}}\prn{\lambda_{0},\lambda}\ct{T}{_{{\mu_1}...{\mu_r}}}\spacef .
		\end{equation}
		
		The two vectors $\ct{\ell}{^\alpha}|_{p_0}$ and $\ct{N}{^\alpha}|_{p_0}$ define a {\em timelike} 2-plane at $p_0\in \scri$, that we call {\em the fundamental plane}. That this plane is timelike is obvious in the cases with $\Lambda \geq 0$, but not so in the case with $\Lambda <0$ because, in principle, there could be null geodesics arriving at $\scri$ and orthogonal to the {\em spacelike} $\ct{N}{^\alpha}$ there, i.e., tangent to $\scri$ at $p_0\in\scri$. However, this is not possible for the null geodesics we need to consider ---that is, coming from $\hat M \subset M$--- due to the particular nature of $\scri$ as a hypersurface in $M$: recall that we have $ \cd{_{\alpha}}\ct{N}{_{\beta}}\eqs 0 $ implying that $\scri$ is a totally geodesic hypersurface, and therefore any geodesic instantly tangent to $\scri$ will remain tangent to $\scri$ all along.\footnote{One might argue that this totally geodesic property depends on the partial gauge fixing that we assumed, but this is not relevant for the result. Had we not fixed the partial gauge, we would nevertheless have $ \cd{_{\alpha}}\ct{N}{_{\beta}}\eqs \frac{1}{4}\cd{_{\rho}}\ct{N}{^{\rho}} \ct{g}{_{\alpha\beta}}$ implying that $\scri$ is a totally umbilical hypersurface in $M$, and therefore {\em null} geodesics will remain tangent to $\scri$ if they are tangent to it at one single point, that is to say, $\scri$ is a ``photon sphere'' when $\Lambda <0$ \cite{Perlick2005}.}	
It follows that there is a {\em unique} future lightlike direction (other than $\ct{\ell}{^\alpha}|_{p_0}$) in the plane defined by $\ct{\ell}{^\alpha}|_{p_0}$ and $\ct{N}{^\alpha}|_{p_0}$ at $p_0\in \scri$ which is determined by the vector
		\begin{equation}\label{eq:kdef}
			\ct{k}{^\alpha}\defeqs \ct{N}{^\alpha}-\frac{\Lambda}{6}\ct{\ell}{^\alpha}\spacef,
			\hspace{1cm} \ct{k}{^\mu}\ct{\ell}{_{\mu}}\eqs -1\spacef,
		\end{equation}
where the second equality is a convenient normalization relation. Notice that $\ct{k}{^\alpha}$ depends on the vector $\ct{\ell}{^\alpha}|_\scri$ ---that is, on the null geodesic $\gamma$--- for $\Lambda \neq 0$, {\em but not for $\Lambda=0$} in which case it is simply $\ct{k}{^\alpha}|_{\Lambda =0} = \ct{N}{^\alpha}$ for {\em every} $\gamma$.\\
		
	Following the same steps as in \cite{Fernandez-Alvarez_Senovilla-afs}, the explicit solution of \eqref{eq:operatorL-diff-eq} %form of $ 	\ctrdu{\lambda}{\gamma}{L}{_{\alpha}^\beta} $ 
	can be shown to be
		\begin{equation}\label{eq:operatorL-final-form}
			\ctrdu{\lambda}{\gamma}{L}{_{\beta}^\alpha}=-\ct{k}{^\alpha}\ct{\ell}{_{\beta}}-\frac{1}{2}\ctc{y}{_{\mu}}\ctc{y}{^{\mu}}\ct{\ell}{^\alpha}\ct{\ell}{_{\beta}}-\Xi^2\ct{\ell}{^\alpha}\ct{k}{_{\beta}}+\Xi\ctc{P}{^\alpha_{\beta}}-\Xi\ct{\ell}{^\alpha}\ctc{y}{_{\beta}}+\ctc{y}{^\alpha}\ctc{\ell}{_{\beta}}\spacef,
		\end{equation}
where 
\begin{itemize}
\item the spacelike vector $ \ctc{y}{^\alpha}\in  T_{p_0}M$ is explicitly given by
		\begin{equation}\label{eq:operatorL-y-solution}
		\ctc{y}{^\alpha}=-\Omega\prn{\lambda}\int_{\lambda_{1}}^{\lambda}\frac{1}{\Omega^2\prn{\lambda'}}\pctrd{\lambda'}{N}{^{\alpha}}\df{\lambda'}\spacef,
		\end{equation}
and is orthogonal to the fundamental timelike plane %generated by $ \ct{N}{^{\alpha}} |_{p_0}$ and $ \ct{\ell}{^\alpha} |_{p_0}$ (or by $ \ct{k}{^{\alpha}} $ and $ \ct{\ell}{^\alpha} |_{p_0}$):
		\begin{align}
			\ctc{y}{^\mu}\ct{\ell}{_{\mu}}\evalat{p_{0}}=0=\ctc{y}{^\mu}\ct{k}{_{\mu}}\evalat{p_{0}}\spacef.
		\end{align}
\item $ \ctc{P}{^\alpha_{\beta}} \defeqs \ct{\delta}{^\alpha_{\beta}} +\ct{k}{^\alpha}\ct{\ell}{_{\beta}}+\ct{\ell}{^\alpha}\ct{k}{_{\beta}}$ is the projector orthogonal to the fundamental plane
\item and we use the shorthand
		\begin{equation}\label{eq:operatorL-xi}
	 		\cs{\Xi}\defeq \frac{\Omega(\lambda)}{\Omega(\lambda_1)}\spacef.
	 	\end{equation}
\end{itemize}
Observe that $ \ctc{y}{^\alpha} $, $ \ctc{P}{^\alpha_{\beta}} $ and $ \Xi $ depend on the null geodesic $ \gamma $ and/or on the point $ p\prn{\lambda} $. %Also, it is obvious that $ \ct{k}{^\alpha} $ depends on --and it is determined by-- $ \gamma $. 

An equivalent useful expression for $ \ctrdu{\lambda}{\gamma}{L}{_{\alpha}^\beta} $ using $\ct{N}{^\alpha}|_{p_0}$ instead of $\ct{k}{^\alpha}$ is
\begin{equation}\label{eq:operatorL-final-form2}
			\ctrdu{\lambda}{\gamma}{L}{_{\beta}^\alpha}=-\ct{N}{^\alpha}\ct{\ell}{_{\beta}}+\left(\frac{\Lambda}{3}(1+\Xi^2)-\frac{1}{2}\ctc{y}{_{\mu}}\ctc{y}{^{\mu}}\right)\ct{\ell}{^\alpha}\ct{\ell}{_{\beta}}-\Xi^2\ct{\ell}{^\alpha}\ct{N}{_{\beta}}+\Xi\ctc{P}{^\alpha_{\beta}}-\Xi\ct{\ell}{^\alpha}\ctc{y}{_{\beta}}+\ctc{y}{^\alpha}\ctc{\ell}{_{\beta}}
		\end{equation}
valid for arbitrary values of $\Lambda\,  (\eqs -3 \ct{N}{_\alpha}\ct{N}{^\alpha})$.\\

%There is a subtlety\fran{Al final escribo algo, pero creo que me olvido de parte del argumento} in \cref{eq:kdef} in the $ \Lambda<0 $ case, as one may have null geodesics $ \gamma $ which are tangent to $ \scri $, $ \ct{\ell}{^\mu}\ct{N}{_{\mu}}\evalat{\scri}=0 $. However, the partial conformal gauge-fixing implies that $ \ct{\ell}{^\mu}\cd{_{\mu}}\prn{\ct{N}{_{\rho}}\ct{\ell}{^\rho}}\evalat{\scri} =0 $, hence any null geodesic tangent to $ \scri $ remains tangent everywhere ---it cannot be a null geodesic coming from $ \ps{M} $, which are the curves considered in this construction.\\

	As explained in \cite{Fernandez-Alvarez_Senovilla-afs}, the asymptotic behaviour of \emph{any} physical field ---here represented by the fully covariant version $\ct{T}{_{{\mu_1}...{\mu_r}}}$ of its defining tensor field --- can be obtained by applying the next steps%\fran{No cambié el texto de los pasos con respecto al otro artículo}:
		\begin{steps}
		\item Construct $\lct{\underaccent{*}{T}}{_{{\alpha_1}...{\alpha_r}}}$ from $\ct{T}{_{{\mu_1}...{\mu_r}}}|_{p(\lambda)}$ according to \eqref{eq:combined},\label{it:asymptotic-propagation-first-step}
			%\item Propagate $F$ from $ p\prn{\lambda} $ to $ p_{0} $ using $ \ct{t}{_{\beta}^\alpha}(\lambda_{0},\lambda) $ --hence defining a new covariant tensor of the same type at $ p_{0} $ on $ \scri^+ $.\label{it:asymptotic-propagation-first-step}
			%\item Apply to the covariant version of the new tensor at $ p_{0} $ as many copies of $ \ctrdu{\lambda}{\gamma}{L}{_{\beta}^\alpha} $ as free indices has the field.\label{it:asymptotic-propagation-second-step}
			\item Expand the resulting expression in powers of $ \lambda $ near $ \lambda_{0}=0 $. \label{it:asymptotic-propagation-third-step}
		\end{steps}
	This arrangement  `compares' the parallel propagation of $\ct{T}{_{{\mu_1}...{\mu_r}}}$ in $ \ps{M} $ from the point $ p_1 $ to $ p(\lambda) $ with its parallel propagation in $ \cs{M} $ between the same two points. Pushing $ \lambda $ towards the limit value $ \lambda_0=0  $ one takes this comparison towards infinity of $ \ps{M} $.
		\begin{table}[h!]
									\centering
									\begin{tabular}{ |M{3cm}| M{3cm} | M{3cm}| M{3cm}| N }
										\hline
											\quad Weyl-tensor candidate & Non-vanishing $ \csru{(a)}{\psi}_{i} $ &  $ \csru{(a)}{\psi}_{i} $ when $ \lambda=\lambda_{0} $  & PND &\\ \hline 
										$ \ctru{(4)}{C}{_{\alpha\beta\gamma\delta}} $& $ \csru{(4)}{\psi}_{4} $ & $ \Omega^{-2}(\lambda_{1}) \cs{\phi}_{4} $& $ \prn{\ct{\ell}{^\alpha},\ct{\ell}{^\alpha},\ct{\ell}{^\alpha},\ct{\ell}{^\alpha}} $&\\[1cm] \hline 
										$ \ctru{(3)}{C}{_{\alpha\beta\gamma\delta}} $& $ \csru{(3)}{\psi}_{3} $ & $ \Omega^{-3}(\lambda_{1}) \cs{\phi}_{3} $&$ \prn{\ct{\ell}{^\alpha},\ct{\ell}{^\alpha},\ct{\ell}{^\alpha},\ct{k}{^\alpha}} $&\\[1cm] \hline
										$ \ctru{(2)}{C}{_{\alpha\beta\gamma\delta}} $& $ \csru{(2)}{\psi}_{2} $ & $ \Omega^{-4}(\lambda_{1}) \cs{\phi}_{2} $&$ \prn{\ct{\ell}{^\alpha},\ct{\ell}{^\alpha},\ct{k}{^\alpha},\ct{k}{^\alpha}} $&\\[1cm] \hline				
										$ \ctru{(1)}{C}{_{\alpha\beta\gamma\delta}} $& $ \csru{(1)}{\psi}_{1} $ & $ \Omega^{-5}(\lambda_{1}) \cs{\phi}_{1} $&$ \prn{\ct{\ell}{^\alpha},\ct{k}{^\alpha},\ct{k}{^\alpha},\ct{k}{^\alpha}} $&\\[1cm] \hline
										$ \ctru{(0)}{C}{_{\alpha\beta\gamma\delta}} $& $ \csru{(0)}{\psi}_{0} $ & $ \Omega^{-6}(\lambda_{1}) \cs{\phi}_{0} $&$ \prn{\ct{k}{^\alpha},\ct{k}{^\alpha},\ct{k}{^\alpha},\ct{k}{^\alpha}} $&\\[1cm] \hline	
									\end{tabular}
	\caption[Asymptotic propagation of the Weyl tensor]{The asymptotic behaviour of the physical Weyl tensor \eqref{eq:asymptotic-propagation-weyl-tensor} is composed by five terms as listed above, each one with the symmetries of a Weyl tensor, two of them having Petrov type N, two others Petrov type III, and one more with Petrov type D. In a null tetrad containing $\{\ct{\ell}{^\alpha},\ct{k}{^\alpha}\}$ as the two null vectors, each of the five terms possesses a unique non-vanishing Weyl scalar, as shown, which in the limit $ \lambda\rightarrow\lambda_{0}=0 $ coincides up to a multiplicative constant with one of the scalars of the rescaled Weyl tensor $ \ct{d}{_{\alpha\beta\gamma}^\delta} $ (here denoted by $\phi_a$). The corresponding principal null directions (PND) are listed in the last column.%\fran{Quizá cambiar la leyenda (es la misma que en el otro artículo)}
									}
									\label{tab:weyl-candidates}
								\end{table}
								\\
								
	This program can be applied to the physical Weyl tensor $ \pt{C}{_{\alpha\beta\gamma\delta}} $. First, consider $ \pt{C}{_{\alpha\beta\gamma\delta}} $ at $ p(\lambda) $ and take \cref{it:asymptotic-propagation-first-step} to define the tensor $\lct{\underaccent{*}{\hat{C}}}{_{\alpha\beta\gamma\delta}}$ at $ p_{0} $ which,  upon use of \eqref{eq:operatorL-final-form} can be proven to be composed of five terms
%			\begin{equation}
%				\pct{\hat{C}}{_{\alpha\beta\gamma\delta}}=\frac{1}{\Omega\prn{\lambda}}\pct{d}{_{\alpha\beta\gamma\delta}}\spacef,
%			\end{equation}
%	where $ \Omega\ct{d}{_{\alpha\beta\gamma}^\delta}\defeq \pt{C}{_{\alpha\beta\gamma}^\delta} $ is the rescaled Weyl tensor. Then, application of \cref{it:asymptotic-propagation-second-step} gives  
	\begin{equation}\label{eq:asymptotic-propagation-weyl-tensor}
						 \lct{\underaccent{*}{\hat{C}}}{_{\alpha\beta\gamma\delta}}=\Omega \ctru{(4)}{C}{_{\alpha\beta\gamma\delta}}+\Omega^2\ctru{(3)}{C}{_{\alpha\beta\gamma\delta}}+\Omega^3\ctru{(2)}{C}{_{\alpha\beta\gamma\delta}}+\Omega^4\ctru{(1)}{C}{_{\alpha\beta\gamma\delta}}+\Omega^5\ctru{(0)}{C}{_{\alpha\beta\gamma\delta}},
					\end{equation}
				where $ \ctru{(a)}{C}{_{\alpha\beta\gamma\delta}} $ with $ a=0,1,2,3,4 $ are Weyl-tensor candidates\footnote{A Weyl-tensor candidate is any tensor with all the algebraic properties of the Weyl tensor}, regular in the limit to $ \lambda_{0}=0 $ and with the algebraic properties listed in \cref{tab:weyl-candidates}. These tensors as well as $\lct{\underaccent{*}{\hat{C}}}{_{\alpha\beta\gamma\delta}}$ depend on $ \lambda $ via $\Omega(\lambda)$. Assuming that $\Omega$ admits an expansion in powers\footnote{This assumption could be relaxed and, for more general dependences on $\lambda$, the corresponding generalized peeling behaviours can be equally produced.}  of $\lambda$ they can be expanded around $ \lambda_{0}=0 $ as
					\begin{equation}\label{eq:weyl-terms-expanded}
						\ctru{(a)}{C}{_{\alpha\beta\gamma\delta}}=\ctru{(a,0)}{C}{_{\alpha\beta\gamma\delta}}+\sum_{i=1}^{\infty}\ctru{(a,i)}{C}{_{\alpha\beta\gamma\delta}}\lambda^i\spacef.
					\end{equation}
	It has to be pointed out that the leading-order term of \cref{eq:asymptotic-propagation-weyl-tensor} reads
		 	\begin{equation}\label{eq:asymptotic-leading-term-weyl}
				\ctru{(4,0)}{C}{_{\alpha\beta\gamma\delta}}=\frac{4}{\Omega^2\prn{\lambda_{1}}}\ct{d}{_{\mu\nu\rho\sigma}}\ct{k}{^\nu}\ct{k}{^\sigma}\ctc{P}{_{[\alpha}^\mu}\ct{\ell}{_{\beta]}}\ctc{P}{_{[\gamma}^\rho}\ct{\ell}{_{\delta]}}
			\end{equation}
		and is determined by the rescaled Weyl tensor $ \ct{d}{_{\alpha\beta\gamma}^\delta}$, defined by $ \Omega\ct{d}{_{\alpha\beta\gamma}^\delta}\defeq \ct{C}{_{\alpha\beta\gamma}^\delta}\stackrel{\hat M}{=} \pt{C}{_{\alpha\beta\gamma}^\delta} $,  projected to a Petrov-type N Weyl-candidate tensor at $p_0\in \scri$. \\
%		It should be observed that this leading term has always the structure 
%		\begin{equation}
%		\ct{\ell}{_{[\alpha}}\ct{d}{_{\beta][\gamma}}\ct{\ell}{_{\delta]}}
%		\end{equation}
%		where
%		\begin{equation}\label{eq:d-projected-N}
%		\ct{d}{_{\beta\gamma}}\defeq \ct{d}{_{\mu\rho\nu\sigma}}\ct{N}{^\rho}\ct{N}{^\sigma}\ctc{P}{_\beta^\mu}\ctc{P}{_\gamma^\nu}
%		\end{equation}
%		is a traceless symmetric tensor orthogonal to both $\ct{N}{^\alpha}$ and $\ct{\ell}{^\alpha}$. Hence, the leading term vanishes if and only if $\ct{d}{_{\beta\gamma}}$ vanishes, which is equivalent to saying that $\ct{N}{^\alpha}$ is a PND of the rescaled Weyl tensor. This will serve as further inspiration for our radiation condition in the next section.

	Now, the peeling theorem can be presented.
			\begin{thm}[Peeling of the Weyl tensor for arbitrary $ \Lambda $]\label{thm:peeling-weyl}
					Let $ \prn{\cs{M},\ct{g}{_{\alpha\beta}}} $ be a conformal completion \emph{à la} Penrose of a physical space-time with arbitrary $ \Lambda $ and choose an arbitrary point $p_0\in \scri^+$ and a future null vector $\ct{\ell}{^\alpha}$ at $p_0$ normalized according to \cref{eq:null-geodesic-normalisation}. Denote by $ \gamma $ the lightlike geodesic with tangent vector $\ct{\ell}{^\alpha}$ at $p_0$, and let $\lambda$ be its affine parameter such that $ \lambda|_{p_0}=\lambda_{0}=0 $. %Also, let one endpoint $ p_{0} $ ($ \lambda=\lambda_{0}=0 $) of $ \gamma $ be at $ \scri^+ $ and the other one, $ p_{1} $ ($ \lambda=\lambda_{1}=-1 $), in $ \ps{M} $. 
					Then, assuming that $\Omega$ admits an expansion in powers of $\lambda$, the asymptotic behaviour of the physical Weyl tensor $ \pt{C}{_{\alpha\beta\gamma\delta}} $ along $ \gamma $ follows by application of \cref{it:asymptotic-propagation-first-step,it:asymptotic-propagation-third-step} on page \pageref{it:asymptotic-propagation-first-step} and reads
						\begin{equation}\label{eq:peeling-weyl}
 						 \lct{\underaccent{*}{\hat{C}}}{_{\alpha\beta\gamma\delta}}=	\lambda \ctru{(N)}{d}{_{\alpha\beta\gamma\delta}}+\lambda^2\ctru{(III)}{e}{_{\alpha\beta\gamma\delta}}+\lambda^3\ctru{(II/D)}{f}{_{\alpha\beta\gamma\delta}}+\lambda^4\ctru{(I)}{g}{_{\alpha\beta\gamma\delta}}+\lambda^5\ctru{(I)}{h}{_{\alpha\beta\gamma\delta}}+\mathcal{O}\prn{\lambda^6}\spacef,
						\end{equation}
						near $ \lambda=\lambda_{0}=0 $, where the tensors
						\begin{align}
							\ctru{(N)}{d}{_{\alpha\beta\gamma\delta}}\defeq&- \ctru{(4,0)}{C}{_{\alpha\beta\gamma\delta}}\spacef,\label{eq:rrw-d}\\
							\ctru{(III)}{e}{_{\alpha\beta\gamma\delta}}\defeq& \ctru{(3,0)}{C}{_{\alpha\beta\gamma\delta}}+\frac{\Omega_{2}}{2}\ctru{(4,0)}{C}{_{\alpha\beta\gamma\delta}}-\ctru{(4,1)}{C}{_{\alpha\beta\gamma\delta}}\spacef,\label{eq:rrw-e}\\
							\ctru{(II/D)}{f}{_{\alpha\beta\gamma\delta}}\defeq& -\ctru{(2,0)}{C}{_{\alpha\beta\gamma\delta}}-\Omega_{2}\ctru{(3,0)}{C}{_{\alpha\beta\gamma\delta}}+\frac{\Omega_{3}}{6}\ctru{(4,0)}{C}{_{\alpha\beta\gamma\delta}}+\ctru{(3,1)}{C}{_{\alpha\beta\gamma\delta}}-\ctru{(4,2)}{C}{_{\alpha\beta\gamma\delta}}\nonumber\\
							&+\frac{\Omega_{2}}{2}\ctru{(4,1)}{C}{_{\alpha\beta\gamma\delta}}\spacef,\label{eq:rrw-f}\\
							\ctru{(I)}{g}{_{\alpha\beta\gamma\delta}}\defeq& \ctru{(1,0)}{C}{_{\alpha\beta\gamma\delta}}+\frac{3\Omega_{2}}{2}\ctru{(2,0)}{C}{_{\alpha\beta\gamma\delta}}+\prn{\frac{\Omega_{2}^2}{4}-\frac{\Omega_{3}}{3}}\ctru{(3,0)}{C}{_{\alpha\beta\gamma\delta}}+\frac{\Omega_{4}}{4!}\ctru{(4,0)}{C}{_{\alpha\beta\gamma\delta}}\nonumber\\
							&-\ctru{(2,1)}{C}{_{\alpha\beta\gamma\delta}}+\ctru{(3,2)}{C}{_{\alpha\beta\gamma\delta}}-\ctru{(4,3)}{C}{_{\alpha\beta\gamma\delta}}+\frac{\Omega_{2}}{2}\ctru{(4,2)}{C}{_{\alpha\beta\gamma\delta}}-\Omega_{2}\ctru{(3,1)}{C}{_{\alpha\beta\gamma\delta}}\nonumber\\
							&+\frac{\Omega_{3}}{6}\ctru{(4,1)}{C}{_{\alpha\beta\gamma\delta}}\spacef,\label{eq:rrw-g}\\
							\ctru{(I)}{h}{_{\alpha\beta\gamma\delta}}\defeq& -\ctru{(0,0)}{C}{_{\alpha\beta\gamma\delta}}-2\Omega_{2}\ctru{(1,0)}{C}{_{\alpha\beta\gamma\delta}}+\prn{\frac{\Omega_{3}}{2}-\frac{3}{4}\Omega_{2}^2}\ctru{(2,0)}{C}{_{\alpha\beta\gamma\delta}}+\prn{\frac{1}{6}\Omega_{2}\Omega_{3}-\frac{\Omega_{4}}{12}}\ctru{(3,0)}{C}{_{\alpha\beta\gamma\delta}}\nonumber\\
							&+\frac{\Omega_{5}}{5!}\ctru{(4,0)}{C}{_{\alpha\beta\gamma\delta}}+\ctru{(1,1)}{C}{_{\alpha\beta\gamma\delta}}-\ctru{(2,2)}{C}{_{\alpha\beta\gamma\delta}}+\ctru{(3,3)}{C}{_{\alpha\beta\gamma\delta}}-\ctru{(4,4)}{C}{_{\alpha\beta\gamma\delta}}\nonumber\\
							&+\frac{3}{2}\Omega_{2}\ctru{(2,1)}{C}{_{\alpha\beta\gamma\delta}}-\Omega_{2}\ctru{(3,2)}{C}{_{\alpha\beta\gamma\delta}}+\frac{\Omega_{2}}{2}\ctru{(4,3)}{C}{_{\alpha\beta\gamma\delta}}+\frac{\Omega_{2}^2}{4}\ctru{(3,1)}{C}{_{\alpha\beta\gamma\delta}}\nonumber\\
							&-\frac{\Omega_{3}}{3}\ctru{(3,1)}{C}{_{\alpha\beta\gamma\delta}}+\frac{\Omega_{3}}{6}\ctru{(4,2)}{C}{_{\alpha\beta\gamma\delta}}+\frac{\Omega_{4}}{4!}\ctru{(4,1)}{C}{_{\alpha\beta\gamma\delta}}\spacef,\label{eq:rrw-h}
						\end{align}
					are Weyl-tensor candidates labelled with their {\em generic} Petrov type, respectively; $ \Omega_{i} $, with $ i=1,2,3,4,5 $, is the $ i$-th derivative of $ \Omega $ w.r.t. $ \lambda $ evaluated at $ \lambda=\lambda_{0}=0 $, and $ \ctru{(a,b)}{C}{_{\alpha\beta\gamma\delta}} $ with $ a=0,1,2,3,4 $, $ b=0,1,2,... $ are the tensors appearing in the expansion \eqref{eq:weyl-terms-expanded} of the Weyl-tensor candidates $ \ctru{(a)}{C}{_{\alpha\beta\gamma\delta}} $ in \eqref{eq:asymptotic-propagation-weyl-tensor}, also shown in \cref{tab:weyl-candidates}, each one having only one non-vanishing Weyl scalar $ \csru{(a)}{\psi}_{a} $ in the null tetrad containing $ \ct{\ell}{^\alpha} $ and $ \ct{k}{^\alpha} $.
				\end{thm}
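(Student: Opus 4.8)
The plan is to start from the five-term decomposition \eqref{eq:asymptotic-propagation-weyl-tensor}, already established, which realises \cref{it:asymptotic-propagation-first-step} for the physical Weyl tensor, and to carry out \cref{it:asymptotic-propagation-third-step}: Taylor-expand its right-hand side in $\lambda$ about $\lambda_0=0$ and collect powers of $\lambda$. Two facts fix the expansion of the scalar prefactors $\Omega^{5-a}$: since $p_0\in\scri=\{\Omega=0\}$ we have $\Omega(\lambda_0)=0$, and the parametrisation \eqref{eq:null-geodesic-normalisation} gives $(\df\Omega/\df\lambda)|_{\lambda_0}=-1$, that is, $\Omega_1=-1$. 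Hence, under the standing hypothesis that $\Omega$ is a power series in $\lambda$, $\Omega(\lambda)=-\lambda+\tfrac{\Omega_2}{2}\lambda^2+\tfrac{\Omega_3}{3!}\lambda^3+\tfrac{\Omega_4}{4!}\lambda^4+\tfrac{\Omega_5}{5!}\lambda^5+\mathcal{O}(\lambda^6)$ with $\Omega_i$ the $i$-th $\lambda$-derivative of $\Omega$ at $\lambda_0$, and I would next compute $\Omega^2,\dots,\Omega^5$ as polynomials in $\lambda$ up to order $\lambda^5$; their coefficients turn out to be precisely the combinations of $\Omega_2,\dots,\Omega_5$ that appear in \eqref{eq:rrw-d}--\eqref{eq:rrw-h}.

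I would then substitute these, together with the candidate expansions \eqref{eq:weyl-terms-expanded} written as $\ctru{(a)}{C}{_{\alpha\beta\gamma\delta}}=\ctru{(a,0)}{C}{_{\alpha\beta\gamma\delta}}+\sum_{i\ge1}\ctru{(a,i)}{C}{_{\alpha\beta\gamma\delta}}\lambda^i$, into \eqref{eq:asymptotic-propagation-weyl-tensor}. Since $\Omega^{5-a}=(-\lambda)^{5-a}+\mathcal{O}(\lambda^{6-a})$, the term carrying $\ctru{(a)}{C}{_{\alpha\beta\gamma\delta}}$ first contributes at order $\lambda^{5-a}$, so only finitely many $\ctru{(a,i)}{C}{_{\alpha\beta\gamma\delta}}$ enter at each order $\lambda^n$ with $1\le n\le5$; extracting the coefficient of $\lambda^n$ --- a finite linear combination of the $\ctru{(a,i)}{C}{_{\alpha\beta\gamma\delta}}$ with coefficients polynomial in $\Omega_2,\dots,\Omega_5$, obtained by pairing each $\ctru{(a,i)}{C}{_{\alpha\beta\gamma\delta}}$ with the $\lambda^{\,n-i}$-coefficient of $\Omega^{5-a}$ --- should reproduce verbatim the tensors $\ctru{(N)}{d}{_{\alpha\beta\gamma\delta}},\ctru{(III)}{e}{_{\alpha\beta\gamma\delta}},\ctru{(II/D)}{f}{_{\alpha\beta\gamma\delta}},\ctru{(I)}{g}{_{\alpha\beta\gamma\delta}},\ctru{(I)}{h}{_{\alpha\beta\gamma\delta}}$ of \eqref{eq:rrw-d}--\eqref{eq:rrw-h}, everything of higher order collecting into the $\mathcal{O}(\lambda^6)$ remainder. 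That each coefficient is again a Weyl-tensor candidate is immediate, since the $\ctru{(a,i)}{C}{_{\alpha\beta\gamma\delta}}$ are and the Weyl-candidate symmetries span a linear subspace; the \emph{generic} Petrov labels then follow from \cref{tab:weyl-candidates}, for in the null tetrad containing $\ct{\ell}{^\alpha}$ and $\ct{k}{^\alpha}$ each $\ctru{(a,i)}{C}{_{\alpha\beta\gamma\delta}}$ has only $\csru{(a)}{\psi}_{a}$ non-vanishing: thus $\ctru{(N)}{d}{_{\alpha\beta\gamma\delta}}$ carries only $\psi_4$ (type N), $\ctru{(III)}{e}{_{\alpha\beta\gamma\delta}}$ only $\psi_3,\psi_4$ (generically type III), $\ctru{(II/D)}{f}{_{\alpha\beta\gamma\delta}}$ only $\psi_2,\psi_3,\psi_4$ (generically type II, degenerating to D), $\ctru{(I)}{g}{_{\alpha\beta\gamma\delta}}$ the scalars $\psi_1,\dots,\psi_4$, and $\ctru{(I)}{h}{_{\alpha\beta\gamma\delta}}$ the scalars $\psi_0,\dots,\psi_4$ (generically type I).

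I expect the only laborious part to be the bookkeeping at orders $\lambda^4$ and $\lambda^5$, where products $\Omega_i\Omega_j$ and $\Omega_i\Omega_j\Omega_k$ proliferate and one must carry $\Omega^2,\Omega^3,\Omega^4$ consistently to order $\lambda^5$ and track every cross term; there is, however, no conceptual obstacle, because every structural ingredient --- the decomposition \eqref{eq:asymptotic-propagation-weyl-tensor}, the regularity at $\lambda_0$ and single-scalar algebraic form of the $\ctru{(a)}{C}{_{\alpha\beta\gamma\delta}}$ recorded in \cref{tab:weyl-candidates}, and the power-series hypothesis on $\Omega$ --- is already in hand, so the theorem reduces to assembling them. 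A useful consistency check along the way is that the $\lambda^1$ coefficient must emerge as $\ctru{(N)}{d}{_{\alpha\beta\gamma\delta}}=-\ctru{(4,0)}{C}{_{\alpha\beta\gamma\delta}}$, with $\ctru{(4,0)}{C}{_{\alpha\beta\gamma\delta}}$ given explicitly by \eqref{eq:asymptotic-leading-term-weyl} as the Petrov-type-N projection of the rescaled Weyl tensor $\ct{d}{_{\alpha\beta\gamma}^\delta}$ at $p_0$, confirming that the expansion genuinely starts at order $\lambda$.
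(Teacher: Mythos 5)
Your proposal is correct and coincides with the paper's (implicit) proof: the theorem is obtained exactly by substituting the power series of $\Omega$ (with $\Omega(\lambda_0)=0$ and $\Omega_1=-1$ from \eqref{eq:null-geodesic-normalisation}) and the expansions \eqref{eq:weyl-terms-expanded} into the five-term decomposition \eqref{eq:asymptotic-propagation-weyl-tensor}, then collecting powers of $\lambda$ up to $\lambda^5$, with the Petrov labels read off from the single non-vanishing scalar $\csru{(a)}{\psi}_{a}$ of each $\ctru{(a,i)}{C}{_{\alpha\beta\gamma\delta}}$ as recorded in \cref{tab:weyl-candidates}. The coefficient bookkeeping you describe reproduces \eqref{eq:rrw-d}--\eqref{eq:rrw-h} verbatim.
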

				\begin{remark}
					There is only one directional freedom in this result: the choice of the null vector $\ct{\ell}{^\alpha}$ at $p_0$ that determines the geodesic $ \gamma $. %(and one of its points $ p_{1} $ in $\hat M$). 
					Once 
					%$\gamma$ 
		$\ct{\ell}{^\alpha}|_{p_0}$ is chosen, the null vector $ \ct{k}{^\alpha} $ is automatically fixed: each $\ct{\ell}{^\alpha}|_{p_0}$ --equivalently, each null geodesic arriving at $p_0$--- has a uniquely associated $ \ct{k}{^\alpha} $ and the only special feature of the $ \Lambda=0 $ case is that $ \ct{k}{^\alpha}=\ct{N}{^\alpha} $ for every possible $\ct{\ell}{^\alpha}|_{p_0}$. Hence, {\em in the case with $\Lambda=0$}, if the leading term of \cref{eq:asymptotic-leading-term-weyl} vanishes for one null geodesic reaching $ p_{0} $, it does vanish for every possible null geodesic arriving at $ p_{0} $. Moreover and for the same reason, in that case if $  \ctru{(III)}{e}{_{\alpha\beta\gamma}^\delta}$ has Petrov type N or zero for one null geodesic reaching $ p_{0} $, the same holds true for every possible null geodesic arriving at $ p_{0} $. %In that situation, the leading term goes as $\lambda^2$ and it can vanish \fran{Creo que esto ya no es cierto para el segundo término, que puede ser tipo 0 para unas y tipo N para otras}
				\end{remark}
				\begin{remark}
					In \cite{Krtous2004} a result was given for the decay of the physical Weyl scalar $ \ps{\psi}_{4} $ (expressed in powers of a physical affine parameter) that exhibits a strong directional dependence in the $ \Lambda\neq0 $ scenarios. However, from the point of view of \cref{thm:peeling-weyl} the important feature is not just the existence of a directional dependence in the components of the physical Weyl tensor when propagated towards infinity (that is, the dependence on $ \gamma $), but the general algebraic structure that these components have, which does not depend on the null geodesic $ \gamma $ nor on the value of $ \Lambda $. This does not excludes that, in the $ \Lambda\neq0 $ case, for particular $ \gamma $ some components may be more degenerate in the Petrov-classification sense --in concordance with the discussion in \cite{Krtous2004}, see also \cite{Penrose1986}. Moreover, when studying the radiative components of the gravitational field, it is essential to take into account not only $ 	\ctru{(N)}{d}{_{\alpha\beta\gamma\delta}} $ (or $ \phi_{4} $) but also 	$ \ctru{(III)}{e}{_{\alpha\beta\gamma\delta}} $ --this is further explained next, see also \cite{Fernandez-Alvarez_Senovilla-afs}.
				\end{remark}
					
		%%%%%%%%%%%%%%%%%%%%%%%%%%%%%%%%%%%%%%%%%%%%%%%%%%%%%%%%%%%%%%%%%%%%%%%%
	%\section{Relation to incoming gravitational radiation when $\Lambda >0$}\label{sec:relation-to-radiation}
		In \cite{Fernandez-Alvarez_Senovilla-afs} the relationship between the peeling theorem and the no radiation condition for $ \Lambda=0 $ was found. Although much weaker, a possible interesting relation can be established for the $ \Lambda>0 $ scenario using a `no incoming radiation' condition put forward in \cite{Fernandez-Alvarez_Senovilla-dS}.
		% and, indeed, \cref{thm:peeling-weyl} provides further support and a new justification for such characterization. 
		To show this, and once the null $\ct{\ell}{^\alpha}|_{p_0}$ %null geodesic $\gamma$  
		has been chosen, the question is to know when gravitational radiation arrives at $p_0$ via the null geodesic $\gamma$ detemined by $\ct{\ell}{^\alpha}|_{p_0}$.\\
		
		The no incoming radiation criterion can be expressed in several forms, here we just use its simplest form which is well adapted to our goal: incoming radiation propagating along $  \ct{m}{^\alpha}  $ on $ \scri^+ $ is absent if $\ct{k}{^\alpha}$ is a repeated principal null direction of the rescaled Weyl tensor $ \ct{d}{_{\alpha\beta\gamma}^\delta} $. Here $ \ct{m}{^\alpha} $ is the unique unit spacelike vector lying in the fundamental timelike plane, 
		tangent to $ \scri^+ $ (i.e., $ \ct{m}{_\alpha} \ct{N}{^\alpha}  |_{p_0}=0$) and pointing coincidentally with $ \ct{\ell}{^\alpha} |_{p_0}$ (i.e. $ \ct{m}{_\alpha} \ct{\ell}{^\alpha} |_{p_0}>0$).\\

		The relationship with the peeling theorem for $ \Lambda>0 $ follows from the next observation
			\begin{lemma}\label{thm:noincoming-peeling}
				   $ \ct{k}{^\alpha}\stackrel{p_0}{=} \ct{N}{^\alpha}-(\Lambda/6)\ct{\ell}{^\alpha} $ is a repeated principal null direction of $ \ct{d}{_{\alpha\beta\gamma}^\delta}|_{p_0} $ if and only if $ \ctru{(N)}{d}{_{\alpha\beta\gamma}^\delta}= 0 $ and $  \ctru{(III)}{e}{_{\alpha\beta\gamma}^\delta}$ has Petrov type N or zero for the null geodesic $ \gamma $ reaching $p_0 $ with tangent vector $ \ct{\ell}{^\alpha}|_{p_0} $. %such that  .% at points $ p\in \Delta$.
				  %\end{enumerate}
			\end{lemma}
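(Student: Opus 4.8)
The plan is to translate each of the three conditions in the statement above --- ``$\ct{k}{^\alpha}$ is a repeated principal null direction of $\ct{d}{_{\alpha\beta\gamma\delta}}|_{p_0}$'', ``$\ctru{(N)}{d}{_{\alpha\beta\gamma\delta}}=0$'' and ``$\ctru{(III)}{e}{_{\alpha\beta\gamma\delta}}$ is of Petrov type N or zero'' --- into the vanishing of prescribed Weyl scalars of $\ct{d}{_{\alpha\beta\gamma\delta}}$, and then compare them. First I would fix at $p_0$ a null tetrad of the kind used in \cref{tab:weyl-candidates}, that is, one whose two null legs are $\ct{\ell}{^\alpha}$ and $\ct{k}{^\alpha}$ normalised as in \eqref{eq:kdef}, and write $\phi_0,\dots,\phi_4$ for the five Weyl scalars of $\ct{d}{_{\alpha\beta\gamma\delta}}|_{p_0}$ in it --- the $\phi_a$ of \cref{tab:weyl-candidates}. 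By the standard characterisation of principal null directions through the Newman--Penrose scalars --- a null leg is a principal null direction of multiplicity $\geq r$ precisely when the first $r$ scalars counted from it vanish --- the clause ``$\ct{k}{^\alpha}$ is a repeated principal null direction of $\ct{d}{_{\alpha\beta\gamma\delta}}|_{p_0}$'' is equivalent to $\phi_4=\phi_3=0$. Hence it suffices to prove that $\ctru{(N)}{d}{_{\alpha\beta\gamma\delta}}=0\iff\phi_4=0$ and that $\ctru{(III)}{e}{_{\alpha\beta\gamma\delta}}$ is of type N or zero $\iff\phi_3=0$.

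The mechanism underlying both equivalences is that, for every $\lambda$, the Weyl-tensor candidate $\ctru{(a)}{C}{_{\alpha\beta\gamma\delta}}(\lambda)$ lies in the \emph{linear} subspace of Weyl-tensor candidates whose only possibly non-vanishing scalar in the chosen tetrad is $\psi_a$; this is exactly what the two rightmost columns of \cref{tab:weyl-candidates} encode (a quadruple principal null direction $\ct{\ell}{^\alpha}$ is the linear condition $\psi_0=\psi_1=\psi_2=\psi_3=0$; the principal null directions $(\ct{\ell}{^\alpha},\ct{\ell}{^\alpha},\ct{\ell}{^\alpha},\ct{k}{^\alpha})$ of $\ctru{(3)}{C}{_{\alpha\beta\gamma\delta}}$ amount to $\psi_0=\psi_1=\psi_2=\psi_4=0$; and so on). Since this constraint is linear, each coefficient $\ctru{(a,i)}{C}{_{\alpha\beta\gamma\delta}}$ of the expansion \eqref{eq:weyl-terms-expanded} also lies in the same subspace, and by \eqref{eq:asymptotic-leading-term-weyl} and \cref{tab:weyl-candidates} the leading coefficients are pinned down by $\ctru{(4,0)}{C}{_{\alpha\beta\gamma\delta}}$ having $\psi_4=\Omega^{-2}(\lambda_1)\,\phi_4$ and $\ctru{(3,0)}{C}{_{\alpha\beta\gamma\delta}}$ having $\psi_3=\Omega^{-3}(\lambda_1)\,\phi_3$, with $\Omega(\lambda_1)\neq0$ because $p_1\in\ps{M}$.

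The rest is bookkeeping. By \eqref{eq:rrw-d}, $\ctru{(N)}{d}{_{\alpha\beta\gamma\delta}}=-\ctru{(4,0)}{C}{_{\alpha\beta\gamma\delta}}$ is a type-N candidate with principal null direction $\ct{\ell}{^\alpha}$, fixed by the single scalar $\Omega^{-2}(\lambda_1)\phi_4$; it therefore vanishes if and only if $\phi_4=0$. By \eqref{eq:rrw-e}, $\ctru{(III)}{e}{_{\alpha\beta\gamma\delta}}$ is a linear combination of $\ctru{(3,0)}{C}{_{\alpha\beta\gamma\delta}}$, whose only non-zero scalar is $\psi_3=\Omega^{-3}(\lambda_1)\phi_3$, and of $\ctru{(4,0)}{C}{_{\alpha\beta\gamma\delta}}$ and $\ctru{(4,1)}{C}{_{\alpha\beta\gamma\delta}}$, whose only non-zero scalar is $\psi_4$; hence, regardless of the value of $\phi_4$, the candidate $\ctru{(III)}{e}{_{\alpha\beta\gamma\delta}}$ satisfies $\psi_0=\psi_1=\psi_2=0$ and $\psi_3=\Omega^{-3}(\lambda_1)\phi_3$. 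A Weyl-tensor candidate with $\psi_0=\psi_1=\psi_2=0$ is generically of type III and is of type N or zero precisely when, in addition, $\psi_3=0$; therefore $\ctru{(III)}{e}{_{\alpha\beta\gamma\delta}}$ is of type N or zero if and only if $\phi_3=0$. Combining the three equivalences, ``$\ctru{(N)}{d}{_{\alpha\beta\gamma\delta}}=0$ and $\ctru{(III)}{e}{_{\alpha\beta\gamma\delta}}$ of type N or zero'' holds if and only if $\phi_4=\phi_3=0$, i.e. if and only if $\ct{k}{^\alpha}$ is a repeated principal null direction of $\ct{d}{_{\alpha\beta\gamma\delta}}|_{p_0}$, which is the claim.

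I expect the only non-mechanical part to be the second paragraph: one has to check carefully that the fixed-Petrov-type data recorded in \cref{tab:weyl-candidates} is genuinely a linear condition on Weyl-tensor candidates --- so that it is inherited by every Taylor coefficient in \eqref{eq:weyl-terms-expanded} --- and that it separates the $\psi_3$ and $\psi_4$ channels cleanly, so that neither the $\Omega_2$-proportional cross term nor the subleading piece $\ctru{(4,1)}{C}{_{\alpha\beta\gamma\delta}}$ appearing in \eqref{eq:rrw-e} can feed into the $\psi_3$ slot of $\ctru{(III)}{e}{_{\alpha\beta\gamma\delta}}$. Everything else reduces to elementary algebra of Newman--Penrose scalars.
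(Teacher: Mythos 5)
Your proof is correct and follows essentially the same route as the paper's (which is a two-line sketch citing \eqref{eq:asymptotic-leading-term-weyl} and the fact that $\ctru{(3,0)}{C}{_{\alpha\beta\gamma\delta}}$ is the type-III projection of $\ct{d}{_{\alpha\beta\gamma}^{\delta}}$ vanishing iff $\ctru{(III)}{e}{_{\alpha\beta\gamma}^{\delta}}$ is type N or zero): you simply make explicit the reduction of all three conditions to $\phi_4=0$ and $\phi_3=0$ in the $\{\ct{\ell}{^\alpha},\ct{k}{^\alpha}\}$ tetrad. The linearity argument you flag as the only delicate point is sound, since the vanishing of four prescribed Weyl scalars in a tetrad fixed at $p_0$ is a linear condition inherited by every coefficient of \eqref{eq:weyl-terms-expanded}.
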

			\begin{proof}
%				The first part follows by a general result \cite{Fernandez-Alvarez_Senovilla-dS} for radiant supermomenta. The second part, 
The result is easily shown recalling \cref{eq:asymptotic-leading-term-weyl} and noting that $ \ctru{(3,0)}{C}{_{\alpha\beta\gamma\delta}} $ is the projection to a type-III Weyl-tensor candidate of $ \ct{d}{_{\alpha\beta\gamma}^\delta} $ such that $ \ctru{(3,0)}{C}{_{\alpha\beta\gamma\delta}}=0 $ if and only if $ \ctru{(III)}{e}{_{\alpha\beta\gamma}^\delta}$ has Petrov type N or zero -- see remark V.3 in \cite{Fernandez-Alvarez_Senovilla-afs}.
			\end{proof}
			\begin{remark}
				Typically, the leading term $ \ctru{(N)}{d}{_{\alpha\beta\gamma}^\delta} $ is treated as the `radiative' component of the gravitational field. However, this is not accurate because $ \ctru{(III)}{e}{_{\alpha\beta\gamma}^\delta} $ must also be taken into account in order to determine the presence/absence of radiation already in the $ \Lambda=0 $ scenario  \cite{Fernandez-Alvarez_Senovilla-afs} --this is in agreement with \cite{Sachs1962}. What has just been shown is that a similar situation arises with the absence of incoming radiation condition along $\ct{m}{^\alpha}$ in the $ \Lambda>0 $ case. 
			\end{remark}
			
			A particular situation of interest arises if, once the null vector $ \ct{\ell}{^\alpha}|_{p_0} $ is chosen and therefore $\ct{k}{^\alpha}$ is uniquely determined, we ask about the peeling behaviour for the null geodesic arriving at $p_0$ {\em precisely} with $\ct{k}{^\alpha}$ as tangent vector. Let $ \ctru{(N)}{{d'}}{_{\alpha\beta\gamma}^\delta} $ and $ \ctru{(III)}{{e'}}{_{\alpha\beta\gamma}^\delta} $ denote the corresponding Weyl-tensor candidates associated to $\ct{k}{^\alpha}$ as described in \cref{thm:peeling-weyl}. Then,
			 	\begin{equation}\label{eq:P=0}
			 		   \ctru{(N)}{d}{_{\alpha\beta\gamma}^\delta}=\ctru{(N)}{{d'}}{_{\alpha\beta\gamma}^\delta}= 0  \text{ and }   \ctru{(III)}{e}{_{\alpha\beta\gamma}^\delta}\text{ and }\ctru{(III)}{{e'}}{_{\alpha\beta\gamma}^\delta} \text{ have Petrov type N or zero}\implies\cts{P}{^a}\evalat{p_{0}}=0
			 	\end{equation}
where $ \cts{\P}{^a} $ is the asymptotic super-Poynting vector field \cite{Fernandez-Alvarez_Senovilla20b} whose vanishing determines the absence of gravitational radiation arriving at $\scri^+$.\\
			
			To prove \eqref{eq:P=0} notice that the assumptions are equivalent to the vanishing of the two radiant supermomenta associated to $ \ct{\ell}{^\alpha} $ and $ \ct{k}{^\alpha} $, respectively --one shows this in a similar way as the proof of \cref{thm:noincoming-peeling}, using the properties of radiant supermomenta presented in \cite{Fernandez-Alvarez_Senovilla-dS}. But this implies the vanishing of $ \cts{P}{^a} $ \cite{Fernandez-Alvarez_Senovilla-dS} .
			%\end{proof}
			\begin{remark}
				The double implication requires a more elaborated condition deductible from Lemma II.4 of \cite{Fernandez-Alvarez_Senovilla-dS}.
			\end{remark}
			%\fran{La manera directa de verlo es cogerse la versión que damos de la condición de no radiación en términos de los phi's, para luego escribir los phi's en términos de los escalares que aparecen en la tabla 3 de la página 33 del artículo $ \Lambda=0 $. En términos de escalares es una cosa bastante sencilla de hacer. Otra forma más nuestra pero más complicada de hacerlo es cogiéndose las proyecciones luminosas de $ \ctru{(N)}{d}{_{\alpha\beta\gamma}^\delta} $, $ \ctru{(N)}{{d'}}{_{\alpha\beta\gamma}^\delta} $, $ \ctru{(III)}{e}{_{\alpha\beta\gamma}^\delta} $, $ \ctru{(III)}{{e'}}{_{\alpha\beta\gamma}^\delta} $, y escribiéndo lo que se correspondería con las $ \Z $, las $ \W $ y la $ \V $.}
%\jose{¿Se puede justificar entonces el criterio de no radiación en un $p_0$? Habría que ver cómo se traducen las condiciones (2.92) en la tesis en el pelado. Estaría bien saberlo. ¿Se añade algo?}
%%%%%%%%%%%%%%%%%%%%%%%%%%%%%%%%%%%%%%%%%%%%%%%%%%%%%%%%%%%%%%%%%%%%%%%%
		\subsubsection*{Acknowledgments}
		%[toctitle={}]
		Work supported under Grants No. FIS2017-85076-P (Spanish MINECO/AEI/FEDER, EU) and No. IT956-16 (Basque Government). 
%%%%%%%%%%%%%%%%%%%%%%%%%%%%%%%%%%%%%%%%%%%%%%%%%%%%%%%%%%%%%%%%%%%%%%%%	
		
\printbibliography
\end{document}